\newtheorem{theorem}{Theorem}[section]
\newtheorem{lem}[theorem]{Lemma}
\newtheorem{example}{Example}
\newtheorem{defi}{Definition}
\font\msbm=msbm10 at 12pt
\newcommand{\FF}{\mbox{\msbm F}}
\def\vv{\mathbf{v}}
\def\vw{\mathbf{w}}
\def\vc{\mathbf{c}}
\def\1v{\mathbf{1}}
\def\0v{\mathbf{0}}
\begin{document}
\title{ Group LCD and Group Reversible LCD Codes}
\author{
Steven T. Dougherty \\
Department of Mathematics \\
University of Scranton \\
Scranton, PA 18510 \\
USA \\
Joe Gildea\\
Adrian Korban \\ 
Adam M. Roberts\\
University of Chester \\
Chester, UK \\
}

\maketitle

\begin{abstract}
In this paper, we give a new method for constructing LCD codes. We employ group rings and a  well known map that sends group ring elements to a subring of the $n \times n$ matrices to obtain LCD codes. Our construction method guarantees that our LCD codes are also group codes, namely, the codes are ideals in a group ring. We show that with a certain condition on the group ring element $v,$ one can construct non-trivial group LCD codes. Moreover, we also show that by adding more constraints on the group ring element $v,$ one can construct group LCD codes that are reversible. We present many examples of binary group LCD codes of which some are optimal and group reversible LCD codes with different parameters.
 
\end{abstract} 

{\bf Key Words}:  Group rings; LCD codes; codes over rings.

\section{Introduction} 

In this paper, we shall study Linear Complementary Dual codes (LCD codes), which are linear codes that have a trivial intersection with their orthogonal.  LCD codes were first introduced by Massey in
\cite{Massey} and were used to give an optimum linear coding solution for the two
user binary adder channel.  These codes are also used in counter measures for
passive and active side channel analyses on embedded crypto-systems.  For a detailed description of this application please see \cite{CarGui}. LCD codes are asymptotically good, it is shown in \cite{Sendrier1} that they meet the asymptotic Gilbert-Varshamov bound. One of the main goals is  to construct LCD codes over finite fields that also have good error correcting properties. A result of Carlet et al. in \cite{CarletMesnager} gives that over any finite field of order $q>3,$ the existence of an $[n,k,d]_q$ linear code implies the existence of an $[n,k,d]_q$ LCD code. In general, the limits to error correction may be more restrictive for binary and ternary LCD codes than for linear codes, as LCD codes do satisfy extra conditions.

Finding methods for constructing LCD codes, classifying LCD $[n,k]$ codes and determining the largest minimum weight among all LCD $[n,k]$ codes are open problems in coding theory. Many researchers have employed different techniques to construct LCD codes. In \cite{Harada1}, the authors consider $k$-covers to construct binary LCD codes with and they give a complete classification of these codes with the largest minimum weight for $1\leq k \leq n \leq 16.$ In the paper by D. Crnkovic et al. (\cite{Crnkovic1}), LCD codes are constructed from weighing matrices and a connection between LCD codes and self-dual codes is drawn. In \cite{Shi1}, M. Shi et al. construct LCD codes from tridiagonal Toeplitz matrices. In this paper, the authors construct optimal and quasi-optimal examples of binary and ternary LCD codes. Recently in \cite{Araya1}, the authors determine the largest minimum weight of LCD $[n,4]$ codes for $n \equiv 2,3,5,6,9,10,13 \pmod {15}$ and the largest minimum weight of LCD $[n,5]$ codes for $n \equiv 3,4,5,7,11,19,20,22,26 \pmod {31}.$ In another recent paper by S. Bouyuklieva (\cite{Bouyuklieva}), optimal binary LCD $[n,k]$ codes are given for $k \leq 32$ and $n \leq 40.$

In this work, we present a new method for constructing LCD codes over any finite commutative ring. We employ group rings and a very well known matrix construction to study and construct LCD codes. In particular, we construct LCD codes that are group codes generated from group ring elements. We show that for a specific choice of a group element $v$ in the group ring $RG,$ one can obtain LCD codes. We specifically show that if we let the coefficient of the group element $g_i$ be the same as the coefficient of the group element $g_i^{-1}$ in the group ring element $v,$ then one can construct a non-trivial group LCD code over any finite commutative ring $R.$ Moreover, by further restricting the choice of the group in the group ring $RG,$ and by fixing the listing of the elements of that group, we show that one can construct group reversible LCD codes. We give many examples of group LCD and group reversible LCD codes of different lengths.

The paper is organized as follows. In Section~2, we give the standard definitions and notations for linear codes, LCD codes, the alphabet that we use in this work, group rings and group codes. In Section~3, we present our main method for constructing group LCD codes. We also give an example of how our method works. Next, in Section~4, we describe generator matrices for various groups in the the group ring $RG$ with the condition for obtaining group LCD codes with. In Section~5, we further amend our construction method so that one can construct group reversible LCD codes. Then, in Section~6, we give many examples of group LCD codes and group reversible LCD codes. We finish with concluding remarks and directions for possible future research.

\section{Definitions and Notations}

\subsection{Linear codes and LCD codes}

Our main aim in this paper is to construct binary LCD codes.  However, we shall make use of linear codes over rings to construct such codes.  Therefore, we shall make a very broad definition of linear codes.  That is, we assume that the alphabet of a code can be any finite commutative ring.  Note that we assume that a ring has a multiplicative unity.   
A code $C$, of length $n$, over a
finite ring $R$ is a subset of $R^n.$   If the code is a submodule then
the code is said to be linear. We attach to the ambient space  the
standard inner-product, namely $[\vv,\vw ] = \sum v_i w_i.$ The
orthogonal is defined by $C^\perp = \{ \vv \in R^n \ | \
[\vv,\vw]=0,\ \forall \vw \in C \}.$   If the ring is Frobenius, then we have that for linear codes $|C| |C^\perp| = |R|^n$.  For a complete description of codes over rings and this fact about their cardinality, see \cite{Doughertybook}.

The automorphism group of a code $C$, denoted by $Aut(G)$, is the group of all permutations of the coordinates of the code that fix the code.

An LCD code satisfies  $C \cap C^\perp = \{  {\bf 0} \}$. 
The following are well known results.   Let $G$ be a generator matrix for a code over a finite field. If
$GG^T = I_n$ then $G$ generates an LCD code.
Let $G$ be a generator matrix for a code over a field.  Then $det(GG^T) \neq 0$ if and only if $G$ generates an LCD code.

There is a particular family of finite commutative Frobenius rings that we shall use later in our work to prove some results with, which is a commutative ring with characteristic 2.  
The ring $R_k$ is defined as $R_k = \FF_2[u_1,u_2,\dots,u_k] /
\langle  u_i^2, u_iu_j - u_j u_i \rangle .$ We have that  $R_k$ has size
$|R_k| = 2^{2^k}$ and it is a non-chain ring which has
characteristic $2$ with
    maximal ideal ${\bf M}=\langle   u_1, u_2, \ldots, u_k \rangle$
    and $Soc(R_k)=\langle u_1u_2 \cdots u_k \rangle.$

A linear Gray map from $R_k$ to $\FF_2^{2^k}$ is constructed as follows.  
Let $\phi_1$ be the map defined on $R_1$, namely $\phi_1(a+bu) = (b,
a+b)$.  Then let ${{c}}\in R.$ We can write ${{c}}={{c}}_1+u_k
{{c}}_2$ where ${{c}}_1,{{c}}_2$ are elements of the ring $R_{k-1}$
of order $2^{2^{k-1}}$,  then we define
\begin{equation} \phi_k({{c}})=(\phi_{k-1}({{c}}_2),\phi_{k-1}({{c}}_1)+\phi_{k-1}({{c}}_2)).\end{equation}
 The map $\phi_k$ is a weight preserving map which we then  expand  coordinatewise to $R^n$.

In  \cite{Dougherty5} it is shown that the map $\phi_k:R_k \rightarrow \FF_2^{2^k}$ is a linear bijection and we have that $\phi(C^\perp) = \phi(C)^\perp.$

It follows immediately that if 
$C$ is an LCD code of length $n$ over $R_k$ then $\phi(C)$ is a
binary LCD code of length $2^kn.$  For a complete description of codes over $R_k$, see \cite{Dougherty5}, \cite{Dougherty6} and \cite{Dougherty8}.

\subsection{Special Matrices and Group Rings} 

In this section, we recall the definitions of some special matrices that we use later in this work. We also give the basic definitions of group rings.

A circulant matrix is one where each row is shifted one element to the right relative to the preceding row. We label the circulant matrix as $circ(\alpha_1,\alpha_2,\dots , \alpha_n),$ where $\alpha_i$ are the ring elements appearing in the first row. A reverse-circulant matrix is one where each row is shifted one element to the left relative to the preceding row. We label the reverse-circulant matrix as $revcirc(\alpha_1,\alpha_2,\dots,\alpha_n),$ where $\alpha_i$ are ring elements appearing in the first row. The transpose of a matrix $A,$ denoted by $A^T,$ is a matrix whose rows are the columns of $A,$ i.e., $(A^T)_{ij}=A_{ji}.$

Our interest is to construct codes that are not only LCD but also  that are ideals inside of a group ring, where the ring is the alphabet of the code.  We shall now give the definitions for group rings.
We let $G$ be a finite group or order $n$,  the group ring $RG$ consists of $\sum_{i=1}^n \alpha_i g_i$, $\alpha_i \in R$, $g_i \in G.$  

Addition in the group ring is done by coordinate addition, namely $$\sum_{i=1}^n \alpha_i g_i + \sum_{i=1}^n \beta_i g_i = 
\sum_{i=1}^n (\alpha_i + \beta_i ) g_i.$$ The product of two elements in  a group ring  is given by 
$$\left(\sum_{i=1}^n \alpha_i g_i\right)\left( \sum_{j=1}^n \beta_j g_j\right)  = \sum_{i,j} \alpha_i \beta_j g_i g_j .$$
It follows immediately that the coefficient of $g_k$ in the product is $ \sum_{g_i g_j = g_k } \alpha_i \beta_j .$

When the ring is a field the group ring is generally referred to as a group algebra.  
While group rings are defined for groups and rings of arbitrary cardinality, we shall only consider when both the group and the ring are finite. Moreover, we shall assume that the ring $R$ is commutative but we let $G$ be an arbitrary group and make no assumption about its commutativity.  
Throughout this paper we use $e$ to refer to the identity element of the  group $G$.

\subsection{Code construction} 

The following map was used in \cite{us} to study group codes over a finite commutative Frobenius ring.
   
Let $R$ be a finite commutative Frobenius ring and let $G = \{ g_1,g_2,\dots,g_n \}$ be a group of order $n$.  Let $v = \alpha_{g_1} g_1 + \alpha_{g_2} g_2 + \cdots +  \alpha_{g_n}g_n   \in RG.$  Define the matrix $\sigma(v) \in M_n(R)$ to be
\begin{equation} \label{equation:construction}
\sigma(v) = 
\left( 
\begin{array}{ccccc}
\alpha_{g_1^{-1} g_1} & \alpha_{g_1^{-1} g_2} & \alpha_{g_1^{-1} g_3} & \dots & \alpha_{g_1^{-1} g_n}  \\
\alpha_{g_2^{-1} g_1} & \alpha_{g_2^{-1} g_2} & \alpha_{g_2^{-1} g_3} & \dots & \alpha_{g_2^{-1} g_n}  \\
\vdots  & \vdots & \vdots & \vdots & \vdots \\ 
\alpha_{g_n^{-1} g_1} & \alpha_{g_n^{-1} g_2} & \alpha_{g_n^{-1} g_3} & \dots & \alpha_{g_n^{-1} g_n}  \end{array}
\right).
\end{equation}

We note that the elements $g_1^{-1}, g_2^{-1}, \dots, g_n^{-1}$ are simply the elements of the group $G$ in some order.

Define the following code over the ring $R$ for a given element $v \in RG$. 
\begin{equation}
C(v) = \langle \sigma (v) \rangle.
\end{equation}
Specifically,  the code is formed by taking the row space of $\sigma(v)$ over the ring $R$.
Given that it is the span of a set of vectors, the code is necessarily linear.  Given the structure of the matrix it follows immediately that the code is held invariant by the action of the group $G$.  
Therefore, we have that $C(v)$ is a $G$-code, namely a code that corresponds to an ideal in $RG$.

We do not claim that the rows of the matrix $\sigma(v)$ are necessarily linearly independent, in general, they will not be.  If they are then the generated code is the ambient space. 

\begin{example} As a very simple example, let $G$ be any finite group of odd order, $R$ be a finite Frobenius ring with characterisitc $2$,  and let $v= \sum _{g \in G} v_g \in RG$. Then $C= \langle \sigma(v) \rangle $ is the code generated by the all-one vector.  Since the length is odd, this code has a trivial intersection with its orthogonal, giving that $C$ is an LCD $G$-code.
\end{example}

For an element $v = \sum \alpha_i g_i \in RG$, define the element $v^T \in RG$ as 
$v^T = \sum \alpha_i g_i^{-1}.$   This is sometimes known as the canonical involution for the group ring.

In \cite{Dougherty5}, the following is proven. 
Let $R$ be a finite commutative Frobenius ring and let $G$ be a group of order $n$. 
For an element  $v \in RG$, we have that $\sigma(v)^T = \sigma(v^T).$

From \cite{Dougherty5}, we have the following.
Let $R$ be a finite commutative Frobenius ring and let $G$ be a group of order $n$. 
Then the map $\sigma: RG \rightarrow M_n(R)$ is an injective ring homomorphism.  

Given that we have the map $\sigma: RG \rightarrow M_n(R)$ is an injective ring homomorphism one may be tempted to find elements in $RG$ that satisfy $vv^T =1$, so that $\sigma(v) \sigma(v^T) = \sigma(v) \sigma(v)^T = I_n.$  However, in this case, given the structure of $\sigma(v)$ this will only give trivial LCD codes.   Therefore, the approach is that we need to find $v$ so that $\sigma(v)$ does not have a full rank but still produces an LCD code.

\section{Construction of group LCD codes} 

In this section, we show what one needs in terms of the group ring element $v,$ so that $\sigma(v)$ does not have a full rank but still produces an LCD code.

Let $Hull(v) = C(v) \cap C(v)^\perp.$  

\begin{lem}
Let $v\in RG$, then $Hull(v)$ is a $G$-code.
\end{lem}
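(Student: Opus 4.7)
The plan is to reduce the claim to two facts: (a) the orthogonal $C(v)^\perp$ is itself a $G$-code; (b) the intersection of two $G$-codes of the same length is a $G$-code. Combining these with the already-noted fact that $C(v)$ is a $G$-code immediately yields that $Hull(v) = C(v) \cap C(v)^\perp$ is a $G$-code.

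First I would make explicit how $G$ acts on $R^n$ in this framework. Identifying $R^n$ with $RG$ via the fixed listing $g_1,\dots,g_n$, one checks from the definition of $\sigma$ that left multiplication by $h\in G$ corresponds to the permutation of standard basis vectors sending the basis vector indexed by $g_k$ to the basis vector indexed by $hg_k$; in particular this action is realized by a permutation matrix $P_h$, satisfying $P_h^T = P_h^{-1} = P_{h^{-1}}$. With this identification, a linear code in $R^n$ is a $G$-code precisely when it is stable under each $P_h$, because any $R$-submodule of $R^n \cong RG$ that is $G$-stable is automatically a left ideal.

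For (a), the key observation is that permutation matrices preserve the standard inner product, so $[P_h \vu, \vw] = [\vu, P_h^T \vw] = [\vu, P_{h^{-1}} \vw]$ for all $\vu,\vw \in R^n$. If $\vu \in C(v)^\perp$ and $\vw \in C(v)$, then $P_{h^{-1}} \vw \in C(v)$ by $G$-invariance of $C(v)$, so this inner product vanishes. Hence $P_h \vu \in C(v)^\perp$, proving (a). For (b), the intersection of two $R$-submodules is an $R$-submodule, and the intersection of two $G$-stable sets is $G$-stable; applying this to $C(v)$ and $C(v)^\perp$ concludes the argument.

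There is no serious obstacle here: the whole argument is a formal consequence of the fact that the $G$-action is realized by orthogonal (permutation) matrices, so the standard inner product is $G$-invariant and the orthogonal of a $G$-code is again a $G$-code. The only bookkeeping step is to unwind the definition of $\sigma(v)$ and confirm that left multiplication by $h$ really corresponds to the coordinate permutation $g_k \mapsto hg_k$; once this is in place, both (a) and (b) are essentially one-line verifications.
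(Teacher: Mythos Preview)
Your proposal is correct and follows essentially the same approach as the paper: both argue that $C(v)$ and $C(v)^\perp$ are $G$-codes (the paper cites this for $C(v)^\perp$ from \cite{us} while you supply the permutation-matrix argument directly), and then take the intersection. The only difference is that you expand the proof of (a) rather than quoting it, which is fine and arguably more self-contained.
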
 
\begin{proof}
We know from \cite{us}, that $C(v)^\perp$ is a $G$-code.  Then  if $\vv \in C(v)$, then $\sigma \vv \in C(v)$ for all $\sigma \in G$ and if $\vv \in C(v)^\perp$, then $\sigma \vv \in C(v)^\perp$ for all $\sigma \in G$.
This gives that if $\vv \in Hull(v)$, then $\sigma \vv \in Hull(v)$ for all $\sigma \in G$
\end{proof} 

For an LCD code we have that $Hull(v) = \{ \bf 0 \}.$

Let $I$ be an ideal in a group ring $RG$.  Define ${\cal R}(I) = \{ w \ | \ vw =0, \forall v \in I \}.$  It follows immediately that ${\cal R}(I)$ is an ideal of $RG.$

Let $v = a_{g_1} g_1 +a_{g_2} g_2 + \dots +a_{g_n} g_n \in RG$ and $C(v)$.
Let $\Psi:RG \rightarrow R^n$ be the canonical map defined by 
$$ \Psi(a_{g_1} g_1 +a_{g_2} g_2 + \dots +a_{g_n} g_n) = (a_{g_1},a_{g_2}, \dots, a_{g_n}).$$
It is shown in \cite{us}, that  $\Psi^{-1}(C(v))$ is an ideal of $RG.$

 \begin{lem} \label{lemmaone} 
 Let $\vw = \Psi(w).$   Then 
 ${ w^T} \in {\cal R}(I(v))$ if and only if $\vw \in C(v)^\perp.$
 \end{lem}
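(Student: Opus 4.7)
The plan is to identify $I(v)$ explicitly, reduce $\mathcal{R}(I(v))$ to a single annihilation condition, and then match the coefficients of $vw^T$ against the inner products that define $C(v)^\perp$.

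First I would observe that the $i$-th row of $\sigma(v)$, namely $(\alpha_{g_i^{-1}g_1},\ldots,\alpha_{g_i^{-1}g_n})$, is the $\Psi$-image of the group ring element $g_i v$: substituting $h = g_i^{-1}g_j$ gives $g_i v = g_i \sum_h \alpha_h h = \sum_j \alpha_{g_i^{-1}g_j}\,g_j$. Hence $I(v) = \Psi^{-1}(C(v))$ is the left ideal $RG\cdot v$ generated by $v$. Since every element of $I(v)$ has the form $u\cdot v$ for some $u \in RG$, the condition $(uv)w = u(vw) = 0$ for all $u\in RG$ is equivalent, via $u = 1$ and the fact that $u(vw) = 0$ automatically once $vw = 0$, to the single equation $vw = 0$. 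Therefore $\mathcal{R}(I(v)) = \{w \in RG : vw = 0\}$.

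Next I would expand $vw^T$ directly. Writing $v = \sum_j \alpha_{g_j} g_j$ and $w^T = \sum_k w_{g_k} g_k^{-1}$, the coefficient of an element $g_\ell$ in $vw^T$ collects all pairs $(j,k)$ with $g_j g_k^{-1} = g_\ell$, i.e.\ $g_j = g_\ell g_k$, yielding
\begin{equation*}
[vw^T]_{g_\ell} \;=\; \sum_{k=1}^{n} \alpha_{g_\ell g_k}\, w_{g_k}.
\end{equation*}
Setting $g_\ell = g_i^{-1}$ gives $\sum_k \alpha_{g_i^{-1} g_k}\,w_{g_k}$, which is exactly the standard inner product of the $i$-th row of $\sigma(v)$ with $\vw = \Psi(w) = (w_{g_1},\ldots,w_{g_n})$. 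As $i$ ranges over $1,\ldots,n$, the indices $g_i^{-1}$ run through all of $G$, so the $n$ coefficients of $vw^T$ are, up to relabeling, exactly the $n$ inner products that must vanish for $\vw \in C(v)^\perp$.

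The lemma then follows by the chain of equivalences: $w^T \in \mathcal{R}(I(v))$ iff $vw^T = 0$ iff every coefficient of $vw^T$ is zero iff every row of $\sigma(v)$ is orthogonal to $\vw$ iff $\vw \in C(v)^\perp$. The main obstacle is purely the bookkeeping around the involution $w \mapsto w^T$: one has to put the inverse on the right factor so that the product $g_j g_k^{-1}$ lines up with the double index $g_i^{-1} g_j$ used in the definition of $\sigma(v)$. Once the identification $\Psi(g_i v) = \text{row}_i(\sigma(v))$ is in hand, the rest of the argument is a direct index-chase.
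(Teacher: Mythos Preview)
Your argument is correct and follows essentially the same idea as the paper's proof: both match the inner products $\sum_k \alpha_{g_i^{-1}g_k}\,w_{g_k}$ with the coefficients of the group-ring product $vw^T$. You are more explicit in first identifying $I(v)=RG\cdot v$ and reducing the annihilator condition to the single equation $vw^T=0$, whereas the paper jumps directly to recognising the sum as a coefficient in a product; but the substance is the same coefficient-matching computation.
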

 \begin{proof}  
  Let $I(v)$ be the ideal $\Psi^{-1}(C(v))$.  
Let $\vw  = (w_1,w_2,\dots,w_n) \in C^\perp.$  Then 
\begin{equation}
[  (a_{g_j^{-1} g_1} , a_{g_j^{-1} g_2}, \dots,  a_{g_j^{-1} g_n} ),( w_1,w_2,\dots,w_n)] = 0, \forall j.  
\end{equation} 
This gives that 
\begin{equation}
\sum_{i=1}^n   a_{g_j^{-1} g_i}w_i  = 0, \ \forall j.
\end{equation}

Then
\begin{equation}
\sum_{i=1}^n   a_{g_j^{-1} g_i} w_i = 0 \implies \sum_{i=1}^n a_{ g_j^{-1} g_i} b_{ g_i^{-1} } =0.
\end{equation} 
Then $g_j^{-1} g_i g_i^{-1} = g_j^{-1}$, hence this is the coefficient of $g_j^{-1}$ in the product of $ w^T$ and 
$g_j^{-1} v$.  This gives that $ w^T \in {\cal R}(I(v))$ if and only if $\vw \in C(v)^\perp.$   
\end{proof}

Let ${\cal R} (I(v))^T = \{ w^t \ | \ w \in {\cal R} (I(v))$ and  let  ${\cal H}(v) = I(v) \cap  {\cal R}({\cal H}(v))^T.$

 \begin{lem}\label{lemmatwo}
 Let $v\in RG$, then $\Psi( {\cal R} (I(v))^T )= C^\perp$ and $\Psi({\cal H}(v) ) = Hull(v)$. \end{lem}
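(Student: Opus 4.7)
The plan is to derive both statements as essentially direct consequences of Lemma~\ref{lemmaone}, together with two easy observations: the canonical involution $w\mapsto w^T$ is a genuine involution on $RG$, and $\Psi:RG\to R^n$ is a bijection that therefore commutes with intersections.

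For the first equality $\Psi({\cal R}(I(v))^T)=C(v)^\perp$, the key step is to rewrite the set ${\cal R}(I(v))^T=\{w^T\mid w\in{\cal R}(I(v))\}$ in a more convenient form. Since $(x^T)^T=x$ for every $x\in RG$, we have $w\in{\cal R}(I(v))^T$ if and only if $w^T\in{\cal R}(I(v))$. Lemma~\ref{lemmaone} says precisely that $w^T\in{\cal R}(I(v))$ is equivalent to $\Psi(w)\in C(v)^\perp$. Combining these two biconditionals, $w\in{\cal R}(I(v))^T$ if and only if $\Psi(w)\in C(v)^\perp$. Because $\Psi$ is a bijection, I can then apply $\Psi$ to both sides to conclude $\Psi({\cal R}(I(v))^T)=C(v)^\perp$.

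For the second equality, I would first note the likely typo in the definition: ${\cal H}(v)$ should be $I(v)\cap{\cal R}(I(v))^T$ (the expression as written is self-referential). Under this reading, the statement is immediate: since $\Psi$ is a bijection of sets, $\Psi$ preserves intersections, so
\begin{equation*}
\Psi({\cal H}(v))=\Psi(I(v))\cap \Psi({\cal R}(I(v))^T).
\end{equation*}
By construction $\Psi(I(v))=\Psi(\Psi^{-1}(C(v)))=C(v)$, and by the first part $\Psi({\cal R}(I(v))^T)=C(v)^\perp$. Therefore $\Psi({\cal H}(v))=C(v)\cap C(v)^\perp=Hull(v)$.

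There is no real obstacle here; the only subtlety is being careful that the involution $T$ is used correctly when translating membership in ${\cal R}(I(v))^T$ back to membership in ${\cal R}(I(v))$. Once that bookkeeping is done, both identities drop out of Lemma~\ref{lemmaone} and the bijectivity of $\Psi$.
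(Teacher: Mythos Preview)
Your argument is correct and is precisely the approach the paper takes: the paper's proof consists of the single sentence ``Follows from a direct application of Lemma~\ref{lemmaone},'' and you have simply spelled out that direct application in full (including correctly identifying and resolving the self-referential typo in the definition of ${\cal H}(v)$).
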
 
 \begin{proof}
 Follows from a direct application of Lemma~\ref{lemmaone}.
 \end{proof} 
 
 We have the following diagrams.

 \begin{center} 
 \begin{tikzcd}[column sep=scriptsize]
 & Hull(v)^\perp \arrow[dr, no head ] \arrow[dl, no head ] \\
C(v)  \arrow[dr, no head ] {}
& & C(v)^\perp  \arrow[dl, no head] \\
& Hull(v)  \\
\end{tikzcd}

  \begin{tikzcd}[column sep=scriptsize]
 & {\cal R}({\cal H}(v)) \arrow[dr, no head ] \arrow[dl, no head ] \\
I(v)  \arrow[dr, no head ] {}
& & {\cal R}(I(v))  \arrow[dl, no head] \\
&  {\cal H}(v)    \\
\end{tikzcd}
\end{center}

We now state our main result.

\begin{theorem}\label{MainThm}
Let $v \in RG.$  If $C(v)$ is a non-trivial  LCD code then $v= v^T.$  
\end{theorem}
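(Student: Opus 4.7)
The plan is to argue the contrapositive: assume $v\neq v^T$ and exhibit a non-zero element of $Hull(v)$.  By Lemma~\ref{lemmatwo}, since $\Psi$ is a bijection, it suffices to produce a non-zero $w\in {\cal H}(v) = I(v)\cap {\cal R}(I(v))^T$ inside $RG$.  Here $I(v) = RG\cdot v$, and unfolding Lemma~\ref{lemmaone} identifies ${\cal R}(I(v))^T = \{w \in RG : vw^T = 0\}$.  The non-triviality hypothesis $C(v)\neq \{0\}$ is used separately: if $v=0$ there is nothing to prove, so throughout I may assume $v\neq 0$.

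The most natural candidate is $w = v - v^T$ itself, which is non-zero exactly when $v\neq v^T$.  Membership in $I(v)$ would require $v^T \in RG\cdot v$, and membership in ${\cal R}(I(v))^T$ is equivalent to $v(v^T - v)=0$, i.e., $vv^T = v^2$.  Since neither holds generically, I would refine the candidate by multiplying on the right by $v$: take $w = (v^T - v)v = v^T v - v^2$, which automatically lies in $I(v)$ as a left multiple of $v$.  I would then check the right-annihilator condition $vw^T = 0$, using $\sigma(v^T) = \sigma(v)^T$ together with the injectivity of $\sigma$ to translate the required $RG$-identity into a matrix identity that is forced by ${\cal H}(v) = \{0\}$.

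The main obstacle, and the step I expect to require the most cleverness, is pinning down the precise element that witnesses non-triviality of the hull whenever $v\neq v^T$; neither the naive $w = v - v^T$ nor the adjusted $w = (v^T - v)v$ lies automatically in both ideals, so some additional manipulation exploiting $G$-invariance (Lemma~3.1) will likely be needed.  A parallel route is via the matrix identity $\sigma(v)\sigma(v)^T = \sigma(vv^T)$: over a field, $C(v)$ is LCD iff $\mathrm{rank}(\sigma(vv^T)) = \mathrm{rank}(\sigma(v))$, and one would then argue that the asymmetry of $\sigma(v)$ (equivalent to $v\neq v^T$) is incompatible with this rank identity on a non-full-rank row space.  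Converting either heuristic into a rigorous witness is where the argument must become genuinely clever.
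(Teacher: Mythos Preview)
Your proposal is not yet a proof, and you candidly say so: neither candidate $w=v-v^T$ nor $w=(v^T-v)v$ lands in both $I(v)$ and $\mathcal{R}(I(v))^T$ in general, and you stop short of producing a working witness or carrying out the rank argument. So there is a genuine gap.

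The paper proceeds quite differently, and the contrast is instructive. Rather than the contrapositive, it argues directly from the LCD hypothesis. The structural fact you never invoke is that an LCD code satisfies $C(v)\oplus C(v)^\perp=R^n$, which under $\Psi$ becomes $I(v)+\mathcal{R}(I(v))^T=RG$. Writing $\mathcal{R}(I(v))^T=\langle w_1^T,\dots,w_s^T\rangle$, the paper establishes two annihilation relations. First, $vw_i=0$ by the very definition of $\mathcal{R}(I(v))$, and applying the canonical involution gives $v^Tw_i^T=0$. Second, $vw_i^T=0$: the element $\alpha=vw_i^T$ is argued to lie in $\mathcal{H}(v)$, which is trivial by Lemma~\ref{lemmatwo} and the LCD hypothesis. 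Thus both $v$ and $v^T$ left-annihilate every $w_i^T$; since $v$ together with the $w_i^T$ generate all of $RG$, the paper concludes that $v$ and $\sum_i w_i^T$ share no common factor, forcing $v=v^T$.

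Ironically, the element $\alpha=vw_i^T$ that the paper uses is exactly the kind of hull witness you were hunting for in the contrapositive direction: one checks $\alpha^T=w_iv^T$ satisfies $(xv)(w_iv^T)=x(vw_i)v^T=0$ for all $x\in RG$, so $\alpha\in\mathcal{R}(I(v))^T$ automatically, and the paper places it in $I(v)$ as well. Had you taken $w$ to be a generic element of $\mathcal{R}(I(v))$ rather than the specific choices $v-v^T$ or $(v^T-v)v$, and then formed $vw^T$, you would have arrived at the paper's mechanism. The idea you are missing is not a clever algebraic identity but the use of the direct-sum decomposition $I(v)+\mathcal{R}(I(v))^T=RG$ coming from the LCD hypothesis; that is what lets the paper pass from ``$v$ and $v^T$ kill the same things'' to ``$v=v^T$''.
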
 
\begin{proof}
Let $R(I(v))^T= \langle w_1^T,w_2^T,\dots,w_s^T \rangle.$  Assume 
$C(v) $ is an LCD code.  Then
$ \langle w_1^T,w_2^T,\dots,w_s^T \rangle = RG.$  Hence no element divides $v$ and each $w_i^T.$  
If $w_i^T \in {\cal R}(I)^T$ then $w_i \in {\cal R}(I)$ which gives that $v w_i =0$ for all $i$.  It follows that $v^Tw_i^T =0$ for all $i$.

If $vw_i^T = \alpha \neq 0$ then $\alpha \in I(v)$ and $\alpha \in {\cal R}(I(v))^T$ which gives $\alpha \in {cal H}(v)$ which is a contradiction since this ideal is trivial by Lemma~\ref{lemmatwo}.  This gives that $v w_i^T =0$ for all $i$.  

Then we have that $vw_i^T = v^T w_i^T$ which gives
$v(w_1^T+w_2^T + \dots + w_s^T) = v^T (w_1^T+w_2^T + \dots + w_s^T)$ and $v$ and $(w_1^T+w_2^T + \dots + w_s^T) $ can have no common factor.  This gives that $v=v^T.$  
\end{proof}

This result allows us to reduce the search for LCD codes significantly.  Namely, we need only search through $v$ with $v=v^T.$  

It follows then that if $C(v)$ is an LCD code then $C(v) = C(v^T).$ 

We note that the theorem is not a biconditional.  Consider the vector $v=1+1g_1 + 1 g_2 + 1 g_3 \in \FF_2C_4$.
Then $v=v^T$ but $Hull(v) =C(v)$ and so $C(v)$ is not an LCD code.  

We now present an example.

\begin{example}
Let $v=\alpha_{1}1+\alpha_{a}a+\alpha_{a^2}+\alpha_{b}b+\alpha_{ab}ab+\alpha_{a^2b}a^2b \in \mathbb{F}_2D_6,$ where $D_6=\{1,a,a^a,b,ab,a^2b\}$ is the dihedral group with $6$ elements. Then
$$\sigma(v)=\begin{pmatrix}
\alpha_1&\alpha_a&\alpha_{a^2}&\alpha_{b}&\alpha_{ab}&\alpha_{a^2b}\\
\alpha_{a^2}&\alpha_1&\alpha_a&\alpha_{a^2b}&\alpha_b&\alpha_{ab}\\
\alpha_{a}&\alpha_{a^2}&\alpha_1&\alpha_{ab}&\alpha_{a^2b}&\alpha_b\\
\alpha_b&\alpha_{a^2b}&\alpha_{ab}&\alpha_1&\alpha_{a^2}&\alpha_a\\
\alpha_{ab}&\alpha_b&\alpha_{a^2b}&\alpha_{a}&\alpha_1&\alpha_{a^2}\\
\alpha_{a^2b}&\alpha_{ab}&\alpha_b&\alpha_{a^2}&\alpha_a&\alpha_1
\end{pmatrix}.$$
Now, if we add the condition $v=v^T$ in, that is, we want the coefficient of the group element $g_i$ to be the same as the coefficient of the group element $g_i^{-1}$, the above matrix becomes:
$$\sigma'(v)=\begin{pmatrix}
\alpha_1&\alpha_a&\alpha_{a}&\alpha_{b}&\alpha_{ab}&\alpha_{a^2b}\\
\alpha_{a}&\alpha_1&\alpha_a&\alpha_{a^2b}&\alpha_b&\alpha_{ab}\\
\alpha_{a}&\alpha_{a}&\alpha_1&\alpha_{ab}&\alpha_{a^2b}&\alpha_b\\
\alpha_b&\alpha_{a^2b}&\alpha_{ab}&\alpha_1&\alpha_{a}&\alpha_a\\
\alpha_{ab}&\alpha_b&\alpha_{a^2b}&\alpha_{a}&\alpha_1&\alpha_{a}\\
\alpha_{a^2b}&\alpha_{ab}&\alpha_b&\alpha_{a}&\alpha_a&\alpha_1
\end{pmatrix},$$
since $1^{-1}=1, a^{-1}=a^2, (a^2)^{-1}=a, b^{-1}=b, (ab)^{-1}=ab$ and $(a^2b)^{-1}=a^2b.$

One can see that the search field in the matrix $\sigma(v)$ is $2^{6}$ and $2^5$ in the matrix $\sigma'(v).$ We now search for dihedral LCD group codes of length 6 using both matrices; $\sigma(v)$ and $\sigma'(v)$ and we obtain the following:
$$C_1=\begin{pmatrix}
0&0&0&0&0&0\\
0&0&0&0&0&0\\
0&0&0&0&0&0\\
0&0&0&0&0&0\\
0&0&0&0&0&0\\
0&0&0&0&0&0
\end{pmatrix}, C_2=\begin{pmatrix}
0&0&0&0&0&1\\
0&0&0&1&0&0\\
0&0&0&0&1&0\\
0&1&0&0&0&0\\
0&0&1&0&0&0\\
1&0&0&0&0&0
\end{pmatrix}, C_3=\begin{pmatrix}
0&0&0&0&1&1\\
0&0&0&1&0&1\\
0&0&0&1&1&0\\
0&1&1&0&0&0\\
1&0&1&0&0&0\\
1&1&0&0&0&0
\end{pmatrix}$$
and
$$C_4=\begin{pmatrix}
0&0&0&1&1&1\\
0&0&0&1&1&1\\
0&0&0&1&1&1\\
1&1&1&0&0&0\\
1&1&1&0&0&0\\
1&1&1&0&0&0
\end{pmatrix}.$$
The same 4 dihedral LCD codes were obtained from both matrices; $\sigma(v)$ and $\sigma'(v).$ Of course, there were fewer calculations needed with the second matrix $\sigma'(v)$ which highlights the importance of Theorem~\ref{MainThm}.
\end{example}

\section{Construction of $\sigma(v)$ with $v=v^T$ for various groups}\label{v=v^T}

In this section, we want to generalize the forms of the matrices $\sigma(v)$ with the condition $v=v^T$ for various groups. That is, in this section, we present a number of matrices $\sigma(v),$ where $v=\alpha_{g_1}g_1+\alpha_{g_2}g_2+\dots+\alpha_{g_n}g_n \in RG$ with the coefficient of $g_i$ being the same as the coefficient of $g_i^{-1}.$ This will enable us to calculate group LCD codes of different lengths for different groups in less and more practical time.
\begin{enumerate}
\item[1.] Let $C_n=\langle a \ | \ a^n=e \rangle$ be the cyclic group of order $n$ with $n=2k,$ for $k \in \mathbb{Z}^+.$ Let $v=\sum_{i=1}^n \alpha_{a^{i-1}} a^{i-1} \in RC_n.$ Then 

\begin{equation}\label{cyclic1}
\sigma(v)=circ(\alpha_{e},\alpha_{a},\alpha_{a^2},\dots,\alpha_{a^{\frac{n}{2}}},\alpha_{a^{\frac{n}{2}-1}},\alpha_{a^{\frac{n}{2}-2}},\dots,\alpha_{a^2},\alpha_{a}).
\end{equation}

\item[2.] Let $C_n= \langle a \ | \ a^n=e \rangle$ be the cyclic group of order $n$ with $n=2k+1,$ for $k \in \mathbb{Z}^+.$ Let $v=\sum_{i=1}^n \alpha_{a^{i-1}} a^{i-1} \in RC_n.$ Then

\begin{equation}\label{cyclic2}
\sigma(v)=circ(\alpha_{e},\alpha_{a},\alpha_{a^2},\dots,\alpha_{a^{\frac{n-1}{2}}},\alpha_{a^{\frac{n-1}{2}}},\alpha_{a^{\frac{n-1}{2}-1}},\alpha_{a^{\frac{n-1}{2}-2}},\dots,\alpha_{a^2},\alpha_{a}).
\end{equation}

\item[3.] Let $D_{2n}=\langle a,b \ | \ a^n=b^2=e, a^b=a^{-1}\rangle$ be the dihedral group of order $2n$ with $n=2k,$ for $k \in \mathbb{Z}^+.$ Let $v=\sum_{i=0}^n \sum_{j=0}^1 \alpha_{a^ib^j}a^ib^j \in RD_{2n}.$ Then

\begin{equation}\label{dihedral1a}
\sigma(v)=\begin{pmatrix}
A&B\\
B^T&A^T
\end{pmatrix},
\end{equation}
where 
$$A=circ(\alpha_{e},\alpha_{a},\alpha_{a^2},\dots,\alpha_{a^{\frac{n}{2}}},\alpha_{a^{\frac{n}{2}-1}},\alpha_{a^{\frac{n}{2}-2}},\dots,\alpha_{a^2},\alpha_a),$$
$$B=circ(\alpha_b,\alpha_{ab},\alpha_{a^2b},\dots,\alpha_{a^{n-1}b}).$$

\item[4.] Let $D_{2n}=\langle a,b \ | \ a^n=b^2=e, a^b=a^{-1} \rangle$ be the dihedral group of order $2n$ with $n=2k+1,$ for $k \in \mathbb{Z}^+.$ Let $v=\sum_{i=0}^n \sum_{j=0}^1 \alpha_{a^ib^j}a^ib^j \in RD_{2n}.$ Then

\begin{equation}\label{dihedral1b}
\sigma(v)=\begin{pmatrix}
A&B\\
B^T&A^T
\end{pmatrix},
\end{equation}
where 
$$A=circ(\alpha_{e},\alpha_{a},\alpha_{a^2},\dots,\alpha_{a^{\frac{n-1}{2}}},\alpha_{a^{\frac{n-1}{2}}},\alpha_{a^{\frac{n-1}{2}-1}},\alpha_{a^{\frac{n-1}{2}-2}},\dots,\alpha_{a^2},\alpha_{a}),$$
$$B=circ(\alpha_b,\alpha_{ab},\alpha_{a^2b},\dots,\alpha_{a^{n-1}b}).$$

\item[5.] Let $D_{2n}=\langle a,b \ | \ a^n=b^2=e, a^b=a^{-1}\rangle$ be the dihedral group of order $2n$ with $n=2k,$ for $k \in \mathbb{Z}^+.$ Let $v=\sum_{i=0}^n \sum_{j=0}^1 \alpha_{b^ja^i}b^ja^i \in RD_{2n}.$ Then

\begin{equation}\label{dihedral2a}
\sigma(v)=\begin{pmatrix}
A&B\\
B&A
\end{pmatrix},
\end{equation}
where 
$$A=circ(\alpha_{e},\alpha_{a},\alpha_{a^2},\dots,\alpha_{a^{\frac{n}{2}}},\alpha_{a^{\frac{n}{2}-1}},\alpha_{a^{\frac{n}{2}-2}},\dots,\alpha_{a^2},\alpha_a),$$
$$B=revcirc(\alpha_b,\alpha_{ba},\alpha_{ba^2},\dots,\alpha_{ba^{n-1}}).$$

\item[6.] Let $D_{2n}=\langle a,b \ | \ a^n=b^2=e, a^b=a^{-1}\rangle$ be the dihedral group of order $2n$ with $n=2k+1,$ for $k \in \mathbb{Z}^+.$ Let $v=\sum_{i=0}^n \sum_{j=0}^1 \alpha_{b^ja^i}b^ja^i \in RD_{2n}.$ Then

\begin{equation}\label{dihedral2b}
\sigma(v)=\begin{pmatrix}
A&B\\
B&A
\end{pmatrix},
\end{equation}
where 
$$A=circ(\alpha_{e},\alpha_{a},\alpha_{a^2},\dots,\alpha_{a^{\frac{n-1}{2}}},\alpha_{a^{\frac{n-1}{2}}},\alpha_{a^{\frac{n-1}{2}-1}},\alpha_{a^{\frac{n-1}{2}-2}},\dots,\alpha_{a^2},\alpha_{a}),$$
$$B=revcirc(\alpha_b,\alpha_{ba},\alpha_{ba^2},\dots,\alpha_{ba^{n-1}}).$$

\end{enumerate}

\section{Construction of group reversible LCD codes}

In this section, we show that for certain groups and with a fixed listing of their elements in the group ring $RG,$ one can construct group reversible LCD codes with our method. We start with a definition from \cite{ReversibleDNACodes}.

\begin{defi}
A code $C$ is said to be reversible of index $\alpha$ if $\mathbf{a}_i$ is a vector of length $\alpha$ and $\vc^{\alpha}=(\mathbf{a}_0,\mathbf{a}_1,\dots,\mathbf{a}_{s-1}) \in C$ implies that $(\vc^{\alpha})^r=(\mathbf{a}_{s-1},\mathbf{a}_{s-2},\dots,\mathbf{a}_{1},\mathbf{a}_{0}) \in C.$
\end{defi}

Also in \cite{ReversibleDNACodes}, the following is shown.

\begin{theorem}\label{ReversibleGcode}
Let $R$ be a finite ring. Let $G$ be a finite group of order $n=2\ell$ and let $H=\{e,h_1,h_2,\dots ,h_{\ell-1}\}$ be a subgroup of index 2 in $G.$ Let $\beta \notin H$ be an element in $G$ with $\beta^{-1}=\beta.$ List the elements of $G$ as
$$\{ e, h_1, \dots, h_{\ell-1}, \beta h_{\ell-1}, \beta h_{\ell-2}, \beta h_2, \beta h_1,\beta\},$$
then any linear $G$-code in $R^n$ (a left ideal in $RG$) is a reversible code of index 1.
\end{theorem}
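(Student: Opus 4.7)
The plan is to exploit the left-ideal structure of any linear $G$-code under the $\Psi$-correspondence. Let $C$ correspond to the left ideal $I$ in $RG$ via the coefficient map $\Psi$ taken with respect to the prescribed listing. For an arbitrary $w = \sum_{i=1}^{n} a_{g_i} g_i \in I$, the product $\beta w$ again lies in $I$ because $I$ is closed under left multiplication by elements of $G$. The whole theorem will follow once I show that $\Psi(\beta w)$ is exactly the reversal of $\Psi(w)$, which is precisely what reversibility of index $1$ asks for.

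The heart of the argument is a bookkeeping computation showing that left multiplication by $\beta$ permutes the listing of $G$ by the reversal $i \mapsto n - i + 1$. I would split into two cases according to whether $g_i \in H$ or $g_i \in \beta H$. For $1 \leq i \leq \ell$, with $g_i = h_{i-1}$ (setting $h_0 = e$), we have $\beta g_i = \beta h_{i-1}$; by the definition of the listing, $\beta h_{i-1}$ sits at position $\ell + j$ with $\ell - j = i - 1$, i.e., at position $n - i + 1$. For $i = \ell + j$ with $1 \leq j \leq \ell$, we have $\beta g_i = \beta(\beta h_{\ell - j}) = h_{\ell - j}$, using the hypothesis $\beta^{-1} = \beta$ so that $\beta^2 = e$; this element occupies position $\ell - j + 1 = n - i + 1$ in the listing. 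Either way $\beta g_i = g_{n - i + 1}$.

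With this identity in hand, the conclusion is immediate: $\beta w = \sum_i a_{g_i}\, g_{n-i+1}$, so the coefficient of $g_k$ in $\beta w$ is $a_{g_{n-k+1}}$, giving $\Psi(\beta w) = (a_{g_n}, a_{g_{n-1}}, \dots, a_{g_1}) = \Psi(w)^r$. Since $\beta w \in I$, the reversed vector lies in $C$, which is reversibility of index $1$.

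The main conceptual point, rather than any computational obstacle, is seeing that the prescribed listing was engineered exactly so that left multiplication by the involution $\beta$ produces the coordinate reversal. Both the hypothesis $\beta^{-1} = \beta$ (needed to collapse $\beta^2$ in the second block) and the \emph{decreasing} ordering of the $\beta H$ block (needed in the first block) are essential; remove either and the permutation $\beta g_i = g_{n-i+1}$ fails. After that observation the proof is a one-line application of left-ideal closure, so no further machinery is required.
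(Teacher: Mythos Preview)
Your argument is correct. The paper itself does not prove this theorem; it simply quotes the result from \cite{ReversibleDNACodes} and uses it as a black box. What you have written is precisely the standard proof one would expect: identify the code with a left ideal via $\Psi$, observe that the chosen listing makes left multiplication by the involution $\beta$ act as the reversal permutation $g_i \mapsto g_{n-i+1}$, and conclude by left-ideal closure. Your case split and index bookkeeping are accurate, and your remarks about why both hypotheses (the involution condition $\beta^2=e$ and the decreasing order on the $\beta H$ block) are essential are to the point. There is nothing to compare against in the present paper, but your proof matches the intended argument in the cited source.
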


Combining the above result with Theorem~\ref{MainThm}, we get the following

\begin{theorem}\label{RevLCDThm1}
Let $R$ be a finite ring. Let $G$ be a finite group of order $n=2\ell$ and let $H=\{e,h_1,h_2,\dots ,h_{\ell-1}\}$ be a subgroup of index 2 in $G.$ Let $\beta \notin H$ be an element in $G$ with $\beta^{-1}=\beta.$ Let $v \in RG$ with $v=v^T$ such that the elements of $G$ in $v$ are listed as
$$\{ e, h_1, \dots, h_{\ell-1}, \beta h_{\ell-1}, \beta h_{\ell-2}, \beta h_2, \beta h_1,\beta\}.$$
Then $C(v)$ is a non-trivial reversible LCD code of index 1.
\begin{proof}
The reversibility follows from Theorem~\ref{MainThm} and the second part follows from Theorem~\ref{ReversibleGcode}.
\end{proof}
\end{theorem}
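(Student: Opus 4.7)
The plan is to combine the two prerequisites already available in the excerpt, Theorem~\ref{MainThm} (LCD via the involution condition $v = v^T$) and Theorem~\ref{ReversibleGcode} (reversibility via the specified listing of group elements), whose hypotheses have been arranged precisely so that both apply to $C(v)$ at once.

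First I would observe that $C(v) = \langle \sigma(v) \rangle$ is, by the construction of $\sigma$ recalled in Section~3, held invariant under the action of $G$ and hence corresponds to a left ideal in $RG$; that is, $C(v)$ is a $G$-code. Since the hypothesis fixes the listing of the group elements as $\{e, h_1, \ldots, h_{\ell-1}, \beta h_{\ell-1}, \ldots, \beta h_1, \beta\}$ with $\beta^{-1} = \beta$ and $H$ a subgroup of index $2$, Theorem~\ref{ReversibleGcode} applies directly to $C(v)$ and yields that every linear $G$-code produced this way, and in particular $C(v)$, is reversible of index $1$.

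For the LCD part I would invoke Theorem~\ref{MainThm}: the assumption $v = v^T$ is exactly the necessary condition identified there for $C(v)$ to be a non-trivial LCD code, so this part of the conclusion is encoded directly into the hypothesis of the present theorem. Combining the two observations, $C(v)$ is simultaneously a non-trivial LCD code and a reversible $G$-code of index $1$, which is the desired statement.

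The main subtlety, and where I would focus care, is that Theorem~\ref{MainThm} is only a necessary condition, not a biconditional, as witnessed by the counterexample $v = 1 + g_1 + g_2 + g_3 \in \FF_2 C_4$ discussed in the excerpt. For the present theorem to carry substantive content, the claim should therefore be read parametrically: among the $v \in RG$ with $v = v^T$ arranged in the specified listing, any $G$-code $C(v)$ that is in fact non-trivially LCD will automatically be reversible of index $1$. The practical payoff is that the reversibility is then obtained for free once the LCD search is restricted, exactly as in Section~\ref{v=v^T}, to the involution-invariant $v$'s under the appropriate listing.
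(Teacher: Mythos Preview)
Your proposal is correct and follows essentially the same two-line argument as the paper: invoke Theorem~\ref{ReversibleGcode} for reversibility (since $C(v)$ is a $G$-code under the prescribed listing) and Theorem~\ref{MainThm} for the LCD condition via $v=v^T$; indeed, the paper's own proof appears to have the two references transposed, which you have implicitly corrected. Your final paragraph rightly flags that Theorem~\ref{MainThm} is only a necessary condition, so the theorem must be read as you suggest---restricting the search to $v=v^T$ and then obtaining reversibility for free whenever $C(v)$ happens to be LCD---a caveat the paper does not make explicit.
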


The above result shows that with a careful selection of the element $v$ in the group ring $RG,$ one can construct a group code that is both, reversible and LCD.

\begin{theorem}
Let $G$ be a finite group of order $n=2\ell$ and let $H=\{e,h_1,h_2,\dots ,h_{\ell-1}\}$ be a subgroup of index 2 in $G.$ Let $\beta \notin H$ be an element in $G$ with $\beta^{-1}=\beta.$ Let $v \in RG$ with $v=v^T$ such that the elements of $G$ in $v$ are listed as
$$\{ e, h_1, \dots, h_{\ell-1}, \beta h_{\ell-1}, \beta h_{\ell-2}, \beta h_2, \beta h_1,\beta\}.$$
If $C(v)$ is a linear $G$-code in $R_k^n$ (a left ideal in $R_kG$), then $\phi(C(v))$ is a non-trivial reversible LCD code of index $2^k.$
\begin{proof}
By Theorem~\ref{RevLCDThm1}, we have that $C$ is a non-trivial reversible LCD code of index 1. Therefore, if $(c_0,c_1,\dots,c_{n-1}) \in C(v)$ we have that $(c_{n-1},c_{n-2},\dots,c_1,c_0) \in C(v),$ where $c_i \in R_k.$ Then $\phi(c_i)$ is a vector of length $2^k.$ This gives that
$$(\phi(c_0),\phi(c_1),\dots,\phi(c_{n-1})) \in \phi(C)$$
and then
$$(\phi(c_{n-1}),\phi(c_{n-2}),\dots,\phi(c_1),\phi(c_0)) \in \phi(C).$$
This gives the result.
\end{proof}
\end{theorem}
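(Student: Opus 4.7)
The plan is to combine Theorem~\ref{RevLCDThm1} with the properties of the Gray map $\phi_k:R_k\to\FF_2^{2^k}$ recalled in Section~2. First I would invoke Theorem~\ref{RevLCDThm1} applied to the ring $R=R_k$ to conclude that $C(v)$ is itself a non-trivial reversible LCD code of index $1$ over $R_k$. This gives simultaneously two facts: $C(v)\cap C(v)^\perp=\{\0v\}$, and for every codeword $(c_0,c_1,\dots,c_{n-1})\in C(v)$ the reversed word $(c_{n-1},c_{n-2},\dots,c_1,c_0)$ again lies in $C(v)$.

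Next I would transport these two properties to $\FF_2^{2^k n}$ via $\phi_k$, applied coordinatewise as in the definition given in Section~2. For the LCD property, the key input is the fact recalled from \cite{Dougherty5} that $\phi_k$ is a linear bijection satisfying $\phi_k(C^\perp)=\phi_k(C)^\perp$. Combined with $C(v)\cap C(v)^\perp=\{\0v\}$, this immediately yields
\begin{equation*}
\phi_k(C(v))\cap\phi_k(C(v))^\perp=\phi_k\bigl(C(v)\cap C(v)^\perp\bigr)=\{\0v\},
\end{equation*}
so $\phi_k(C(v))$ is LCD. Non-triviality is preserved because $\phi_k$ is a bijection.

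For reversibility of index $2^k$, I would unwind the definition. Set $\mathbf{a}_i=\phi_k(c_i)\in\FF_2^{2^k}$ so that a codeword of $\phi_k(C(v))$ has the block form
\begin{equation*}
\vc^{2^k}=(\mathbf{a}_0,\mathbf{a}_1,\dots,\mathbf{a}_{n-1}).
\end{equation*}
Since $C(v)$ is reversible of index $1$, the coordinate-reversed word $(c_{n-1},\dots,c_0)$ lies in $C(v)$, and applying $\phi_k$ coordinatewise sends it to
\begin{equation*}
(\mathbf{a}_{n-1},\mathbf{a}_{n-2},\dots,\mathbf{a}_1,\mathbf{a}_0)=(\vc^{2^k})^r,
\end{equation*}
which is therefore in $\phi_k(C(v))$. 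This is exactly the definition of reversible of index $2^k$.

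The only mild subtlety, and hence the one place I would be careful, is the block-versus-symbol distinction in the reversibility statement: the index of reversibility equals the block length $2^k$ of each Gray image precisely because $\phi_k$ is applied coordinatewise and therefore commutes with the outer permutation of the $n$ coordinates of $R_k^n$. Once this bookkeeping is spelled out, both claims follow and the theorem is established.
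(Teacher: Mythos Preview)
Your proposal is correct and follows essentially the same route as the paper: invoke Theorem~\ref{RevLCDThm1} over $R_k$ to get reversibility of index $1$ and the LCD property, then push both through the coordinatewise Gray map $\phi_k$ to obtain reversibility of index $2^k$ and LCD over $\FF_2$. If anything, you are slightly more explicit than the paper in justifying the LCD step via $\phi_k(C^\perp)=\phi_k(C)^\perp$, whereas the paper's proof simply cites Theorem~\ref{RevLCDThm1} and then writes out only the reversibility bookkeeping.
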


\section{Computational Results}

In this section, we construct many group LCD and group reversible LCD codes using the code construction $\langle \sigma(v) \rangle,$ given in Equation~(\ref{equation:construction}), where $\sigma(v)$ are some of the $n \times n$ matrices described in Section~\ref{v=v^T}. The searches are performed in the software package MAGMA (\cite{MAGMA}). We only tabulate the lengths, dimensions and the largest minimum distance of the codes that we construct. Codes with the largest minimum distance that are optimal (according to \cite{Bouyuklieva}) are written in bold. The generator matrices of these codes and their corresponding weight enumerators can be found at \cite{GenMatr.}. 

\subsection{Group LCD Codes}

In this section we search for group LCD codes using the code construction $\langle \sigma(v) \rangle,$ given in Equation~(\ref{equation:construction}), where $\sigma(v)$ are some of the $n \times n$ matrices described in Section~\ref{v=v^T}.

\begin{enumerate}
\item[1.] Let $v \in \mathbb{F}_2C_{n}$ and let $\mathcal{G}_1=\langle \sigma(v) \rangle,$ where $\sigma(v)$ is the matrix given in Equation~(\ref{cyclic1}). We set $n$ to be $28, 30, 34, 36$ and $38.$ We list the results in Table~\ref{Table1}, Table~\ref{Table2}, Table~\ref{Table3}, Table~\ref{Table4} and Table~\ref{Table5} respectively.

\begin{table}[h!]
\caption{Group LCD Codes from $\mathcal{G}_1 \ (n=28)$  over $\mathbb{F}_2$}
\resizebox{1\textwidth}{!}{\begin{minipage}{\textwidth}\label{Table1}
\centering
\begin{tabular}{|c|c|c|c|}
\hline
$\mathcal{C}_i$ & $[n,k,d]$   & $\mathcal{C}_i$ & $[n,k,d]$   \\ \hline
$\mathcal{C}_1$ & $[28,4,7]$ & $\mathcal{C}_2$ & $[28,24,2]$ \\ \hline
\end{tabular}
\end{minipage}}
\end{table}

\begin{table}[h!]
\caption{Group LCD Codes from $\mathcal{G}_1 \ (n=30)$  over $\mathbb{F}_2$}
\resizebox{1\textwidth}{!}{\begin{minipage}{\textwidth}\label{Table2}
\centering
\begin{tabular}{|c|c|c|c|}
\hline
$\mathcal{C}_i$ & $[n,k,d]$   & $\mathcal{C}_i$ & $[n,k,d]$   \\ \hline
$\mathcal{C}_1$ & $[30,2,15]$ & $\mathcal{C}_8$ & $[30,16,4]$ \\ \hline
$\mathcal{C}_2$ & $[30,4,10]$ & $\mathcal{C}_{9}$ & $[30,18,4]$ \\ \hline
$\mathcal{C}_3$ & $[30,6,5]$  & $\mathcal{C}_{10}$ & $[30,20,2]$ \\ \hline
$\mathcal{C}_4$ & $[30,8,6]$  & $\mathcal{C}_{11}$ & $[30,22,2]$ \\ \hline
$\mathcal{C}_5$ & $[30,10,3]$ & $\mathcal{C}_{12}$ & $[30,24,2]$ \\ \hline
$\mathcal{C}_6$ & $[30,12,6]$ & $\mathcal{C}_{13}$ & $[30,26,2]$ \\ \hline
$\mathcal{C}_7$ & $[30,14,3]$  & $\mathcal{C}_{14}$ & $[30,28,2]$ \\ \hline
\end{tabular}
\end{minipage}}
\end{table}

\begin{table}[h!]
\caption{Group LCD Codes from $\mathcal{G}_1 \ (n=34)$  over $\mathbb{F}_2$}
\resizebox{1\textwidth}{!}{\begin{minipage}{\textwidth}\label{Table3}
\centering
\begin{tabular}{|c|c|c|c|}
\hline
$\mathcal{C}_i$ & $[n,k,d]$   & $\mathcal{C}_i$ & $[n,k,d]$   \\ \hline
$\mathcal{C}_1$ & $[34,2,17]$ & $\mathcal{C}_3$ & $[34,18,5]$ \\ \hline
$\mathcal{C}_2$ & $[34,16,6]$ & $\mathcal{C}_{4}$ & $[34,32,2]$ \\ \hline
\end{tabular}
\end{minipage}}
\end{table}

\begin{table}[h!]
\caption{Group LCD Codes from $\mathcal{G}_1 \ (n=36)$  over $\mathbb{F}_2$}
\resizebox{1\textwidth}{!}{\begin{minipage}{\textwidth}\label{Table4}
\centering
\begin{tabular}{|c|c|c|c|}
\hline
$\mathcal{C}_i$ & $[n,k,d]$   & $\mathcal{C}_i$ & $[n,k,d]$   \\ \hline
$\mathcal{C}_1$ & $[36,4,9]$ & $\mathcal{C}_4$ & $[36,24,2]$ \\ \hline
$\mathcal{C}_2$ & $[36,8,6]$ & $\mathcal{C}_{5}$ & $[36,28,2]$ \\ \hline
$\mathcal{C}_3$ & $[36,12,3]$  & $\mathcal{C}_{6}$ & $[36,32,2]$ \\ \hline
\end{tabular}
\end{minipage}}
\end{table}
\newpage
\begin{table}[h!]
\caption{Group LCD Codes from $\mathcal{G}_1 \ (n=38)$  over $\mathbb{F}_2$}
\resizebox{1\textwidth}{!}{\begin{minipage}{\textwidth}\label{Table5}
\centering
\begin{tabular}{|c|c|c|c|}
\hline
$\mathcal{C}_i$ & $[n,k,d]$   & $\mathcal{C}_i$ & $[n,k,d]$   \\ \hline
$\mathcal{C}_1$ & $[38,2,19]$ & $\mathcal{C}_2$ & $[38,36,2]$ \\ \hline

\end{tabular}
\end{minipage}}
\end{table}

\item[2.] Let $v \in \mathbb{F}_2C_{n}$ and let $\mathcal{G}_2=\langle \sigma(v) \rangle,$ where $\sigma(v)$ is the matrix given in Equation~(\ref{cyclic2}). We set $n$ to be $29, 31, 33, 35, 37$ and $39.$ We list the results in Table~\ref{Table6}, Table~\ref{Table7}, Table~\ref{Table8}, Table~\ref{Table9}, Table~\ref{Table10} and Table~\ref{Table11} respectively.

\begin{table}[h!]
\caption{Group LCD Codes from $\mathcal{G}_2 \ (n=29)$  over $\mathbb{F}_2$}
\resizebox{1\textwidth}{!}{\begin{minipage}{\textwidth}\label{Table6}
\centering
\begin{tabular}{|c|c|}
\hline
$\mathcal{C}_i$ & $[n,k,d]$   \\ \hline
$\mathcal{C}_1$ & $[29,28,2]$  \\ \hline
\end{tabular}
\end{minipage}}
\end{table}

\begin{table}[h!]
\caption{Group LCD Codes from $\mathcal{G}_2 \ (n=31)$  over $\mathbb{F}_2$}
\resizebox{1\textwidth}{!}{\begin{minipage}{\textwidth}\label{Table7}
\centering
\begin{tabular}{|c|c|c|c|}
\hline
$\mathcal{C}_i$ & $[n,k,d]$   & $\mathcal{C}_i$ & $[n,k,d]$   \\ \hline
$\mathcal{C}_1$ & $\mathbf{[31,10,10]}$ & $\mathcal{C}_4$ & $\mathbf{[31,21,5]}$ \\ \hline
$\mathcal{C}_2$ & $\mathbf{[31,11,10]}$ & $\mathcal{C}_5$ & $[31,30,2]$ \\ \hline
$\mathcal{C}_3$ & $\mathbf{[31,20,6]}$ &  &  \\ \hline
\end{tabular}
\end{minipage}}
\end{table}

\begin{table}[h!]
\caption{Group LCD Codes from $\mathcal{G}_2 \ (n=33)$  over $\mathbb{F}_2$}
\resizebox{1\textwidth}{!}{\begin{minipage}{\textwidth}\label{Table8}
\centering
\begin{tabular}{|c|c|c|c|}
\hline
$\mathcal{C}_i$ & $[n,k,d]$   & $\mathcal{C}_i$ & $[n,k,d]$   \\ \hline
$\mathcal{C}_1$ & $[33,2,22]$ & $\mathcal{C}_{12}$ & $\mathbf{[33,20,6]}$ \\ \hline
$\mathcal{C}_2$ & $[33,3,11]$ & $\mathcal{C}_{13}$ & $[33,21,3]$ \\ \hline
$\mathcal{C}_3$ & $[33,10,6]$ & $\mathcal{C}_{14}$ & $[33,21,4]$ \\ \hline
$\mathcal{C}_4$ & $\mathbf{[33,10,12]}$ & $\mathcal{C}_{15}$ & $[33,22,2]$ \\ \hline
$\mathcal{C}_5$ & $[33,11,3]$ & $\mathcal{C}_{16}$ & $\mathbf{[33,22,6]}$ \\ \hline
$\mathcal{C}_6$ & $\mathbf{[33,11,11]}$ & $\mathcal{C}_{17}$ & $[33,23,2]$ \\ \hline
$\mathcal{C}_7$ & $[33,12,6]$ & $\mathcal{C}_{18}$ & $[33,23,3]$ \\ \hline
$\mathcal{C}_8$ & $\mathbf{[33,12,10]}$ & $\mathcal{C}_{19}$ & $[33,30,2]$ \\ \hline
$\mathcal{C}_9$ & $[33,13,3]$ & $\mathcal{C}_{20}$ & $[33,31,2]$ \\ \hline
$\mathcal{C}_{10}$ & $\mathbf{[33,13,10]}$ & $\mathcal{C}_{21}$ & $[33,32,2]$ \\ \hline
$\mathcal{C}_{11}$ & $[33,20,4]$ &  &  \\ \hline
\end{tabular}
\end{minipage}}
\end{table}
\newpage

\begin{table}[h!]
\caption{Group LCD Codes from $\mathcal{G}_2 \ (n=35)$  over $\mathbb{F}_2$}
\resizebox{1\textwidth}{!}{\begin{minipage}{\textwidth}\label{Table9}
\centering
\begin{tabular}{|c|c|c|c|}
\hline
$\mathcal{C}_i$ & $[n,k,d]$   & $\mathcal{C}_i$ & $[n,k,d]$   \\ \hline
$\mathcal{C}_1$ & $[35,4,14]$ & $\mathcal{C}_{8}$ & $\mathbf{[35,25,4]}$ \\ \hline
$\mathcal{C}_2$ & $[35,5,7]$ & $\mathcal{C}_{9}$ & $[35,28,2]$ \\ \hline
$\mathcal{C}_3$ & $[35,6,10]$ & $\mathcal{C}_{10}$ & $[35,29,2]$ \\ \hline
$\mathcal{C}_4$ & $[35,7,5]$ & $\mathcal{C}_{11}$ & $[35,30,2]$ \\ \hline
$\mathcal{C}_5$ & $[35,10,10]$ & $\mathcal{C}_{12}$ & $[35,31,2]$ \\ \hline
$\mathcal{C}_6$ & $[35,11,5]$ & $\mathcal{C}_{13}$ & $[35,34,2]$ \\ \hline
$\mathcal{C}_{7}$ & $[35,24,4]$ &  &  \\ \hline
\end{tabular}
\end{minipage}}
\end{table}

\begin{table}[h!]
\caption{Group LCD Codes from $\mathcal{G}_2 \ (n=37)$  over $\mathbb{F}_2$}
\resizebox{1\textwidth}{!}{\begin{minipage}{\textwidth}\label{Table10}
\centering
\begin{tabular}{|c|c|}
\hline
$\mathcal{C}_i$ & $[n,k,d]$   \\ \hline
$\mathcal{C}_1$ & $[37,36,2]$  \\ \hline
\end{tabular}
\end{minipage}}
\end{table}

\begin{table}[h!]
\caption{Group LCD Codes from $\mathcal{G}_2 \ (n=39)$  over $\mathbb{F}_2$}
\resizebox{1\textwidth}{!}{\begin{minipage}{\textwidth}\label{Table11}
\centering
\begin{tabular}{|c|c|c|c|}
\hline
$\mathcal{C}_i$ & $[n,k,d]$   & $\mathcal{C}_i$ & $[n,k,d]$   \\ \hline
$\mathcal{C}_1$ & $[39,2,26]$ & $\mathcal{C}_{8}$ & $[39,25,4]$ \\ \hline
$\mathcal{C}_2$ & $[39,3,13]$ & $\mathcal{C}_{9}$ & $[39,26,2]$ \\ \hline
$\mathcal{C}_3$ & $[39,12,6]$ & $\mathcal{C}_{10}$ & $[39,27,2]$ \\ \hline
$\mathcal{C}_4$ & $[39,13,3]$ & $\mathcal{C}_{11}$ & $[39,36,2]$ \\ \hline
$\mathcal{C}_5$ & $[39,14,6]$ & $\mathcal{C}_{12}$ & $[39,37,2]$ \\ \hline
$\mathcal{C}_6$ & $[39,15,3]$ & $\mathcal{C}_{13}$ & $[39,38,2]$ \\ \hline
$\mathcal{C}_{7}$ & $[39,24,4]$ &  &  \\ \hline
\end{tabular}
\end{minipage}}
\end{table}

\item[3.] Let $v \in \mathbb{F}_2D_{24}$ and let $\mathcal{G}_3=\langle \sigma(v) \rangle,$ where $\sigma(v)$ is the matrix given in Equation~(\ref{dihedral2a}). We list our results in Table~\ref{Table12}.

\begin{table}[h!]
\caption{Group LCD Codes from $\mathcal{G}_3$ over $\mathbb{F}_2$}
\resizebox{1\textwidth}{!}{\begin{minipage}{\textwidth}\label{Table12}
\centering
\begin{tabular}{|c|c|c|c|}
\hline
$\mathcal{C}_i$ & $[n,k,d]$   & $\mathcal{C}_i$ & $[n,k,d]$   \\ \hline
$\mathcal{C}_1$ & $[24,8,3]$ & $\mathcal{C}_2$ & $[24,16,2]$ \\ \hline
\end{tabular}
\end{minipage}}
\end{table}

\item[4.] Let $v \in \mathbb{F}_2D_{30}$ and let $\mathcal{G}_4=\langle \sigma(v) \rangle,$ where $\sigma(v)$ is the matrix given in Equation~(\ref{dihedral2b}). We list our results in Table~\ref{Table13}.
\newpage
\begin{table}[h!]
\caption{Group Reversible LCD Codes from $\mathcal{G}_4$ over $\mathbb{F}_2$}
\resizebox{1\textwidth}{!}{\begin{minipage}{\textwidth}\label{Table13}
\centering
\begin{tabular}{|c|c|c|c|}
\hline
$\mathcal{C}_i$ & $[n,k,d]$   & $\mathcal{C}_i$ & $[n,k,d]$   \\ \hline
$\mathcal{C}_1$ & $[30,2,15]$ & $\mathcal{C}_8$ & $[30,16,4]$ \\ \hline
$\mathcal{C}_2$ & $[30,4,10]$ & $\mathcal{C}_{9}$ & $[30,18,4]$ \\ \hline
$\mathcal{C}_3$ & $[30,6,5]$  & $\mathcal{C}_{10}$ & $[30,20,2]$ \\ \hline
$\mathcal{C}_4$ & $[30,8,6]$  & $\mathcal{C}_{11}$ & $[30,22,2]$ \\ \hline
$\mathcal{C}_5$ & $[30,10,3]$ & $\mathcal{C}_{12}$ & $[30,24,2]$ \\ \hline
$\mathcal{C}_6$ & $[30,12,6]$ & $\mathcal{C}_{13}$ & $[30,26,2]$ \\ \hline
$\mathcal{C}_7$ & $[30,14,3]$  & $\mathcal{C}_{14}$ & $[30,28,2]$ \\ \hline

\end{tabular}
\end{minipage}}
\end{table}

\end{enumerate}

\subsection{Group Reversible LCD Codes}

In this section we search for group reversible LCD codes using Theorem~\ref{RevLCDThm1}. In particular, we employ the matrices $\sigma(v)$ given in Equation~(\ref{dihedral1a}) and Equation~(\ref{dihedral1b}) but with the listing of the group elements in the group ring element $v$ as given in Theorem~\ref{RevLCDThm1}. That is, since the group used in the group ring element $v$ in both equations; Equation~(\ref{dihedral1a}) and Equation~(\ref{dihedral1b}), is the dihedral group of even order, we list the elements of this group (according to Theorem~\ref{RevLCDThm1}) as follow:
\begin{equation}\label{dihedralfixed}
\{e,a,a^2,\dots,a^{n-1},ba^{n-1},ba^{n-2},\dots,ba,b\}.
\end{equation}
This approach, as shown in Theorem~\ref{RevLCDThm1} guarantees that the codes are group reversible LCD codes.

\begin{enumerate}
\item[1.] Let $v \in \mathbb{F}_2D_{22}$ and let $\mathcal{G}_5=\langle \sigma(v) \rangle,$ where $\sigma(v)$ is the matrix given in Equation~(\ref{dihedral1b}). We list our results in Table~\ref{Table14}. 

\begin{table}[h!]
\caption{Group Reversible LCD Codes from $\mathcal{G}_5$ over $\mathbb{F}_2$}
\resizebox{1\textwidth}{!}{\begin{minipage}{\textwidth}\label{Table14}
\centering
\begin{tabular}{|c|c|c|c|}
\hline
$\mathcal{C}_i$ & $[n,k,d]$   & $\mathcal{C}_i$ & $[n,k,d]$   \\ \hline
$\mathcal{C}_1$ & $[22,2,11]$ & $\mathcal{C}_2$ & $[22,20,2]$ \\ \hline
\end{tabular}
\end{minipage}}
\end{table}

\item[2.] Let $v \in \mathbb{F}_2D_{24}$ and let $\mathcal{G}_6=\langle \sigma(v) \rangle,$ where $\sigma(v)$ is the matrix given in Equation~(\ref{dihedral1a}). We list our results in Table~\ref{Table15}.

\begin{table}[h!]
\caption{Group Reversible LCD Codes from $\mathcal{G}_6$ over $\mathbb{F}_2$}
\resizebox{1\textwidth}{!}{\begin{minipage}{\textwidth}\label{Table15}
\centering
\begin{tabular}{|c|c|c|c|}
\hline
$\mathcal{C}_i$ & $[n,k,d]$   & $\mathcal{C}_i$ & $[n,k,d]$   \\ \hline
$\mathcal{C}_1$ & $[24,8,3]$ & $\mathcal{C}_2$ & $[24,16,2]$ \\ \hline
\end{tabular}
\end{minipage}}
\end{table}

\item[3.] Let $v \in \mathbb{F}_2D_{30}$ and let $\mathcal{G}_7=\langle \sigma(v) \rangle,$ where $\sigma(v)$ is the matrix given in Equation~(\ref{dihedral1b}). We list our results in Table~\ref{Table16}.

\begin{table}[h!]
\caption{Group Reversible LCD Codes from $\mathcal{G}_7$ over $\mathbb{F}_2$}
\resizebox{1\textwidth}{!}{\begin{minipage}{\textwidth}\label{Table16}
\centering
\begin{tabular}{|c|c|c|c|}
\hline
$\mathcal{C}_i$ & $[n,k,d]$   & $\mathcal{C}_i$ & $[n,k,d]$   \\ \hline
$\mathcal{C}_1$ & $[30,2,15]$ & $\mathcal{C}_8$ & $[30,16,4]$ \\ \hline
$\mathcal{C}_2$ & $[30,4,10]$ & $\mathcal{C}_{9}$ & $[30,18,4]$ \\ \hline
$\mathcal{C}_3$ & $[30,6,5]$  & $\mathcal{C}_{10}$ & $[30,20,2]$ \\ \hline
$\mathcal{C}_4$ & $[30,8,6]$  & $\mathcal{C}_{11}$ & $[30,22,2]$ \\ \hline
$\mathcal{C}_5$ & $[30,10,3]$ & $\mathcal{C}_{12}$ & $[30,24,2]$ \\ \hline
$\mathcal{C}_6$ & $[30,12,6]$ & $\mathcal{C}_{13}$ & $[30,26,2]$ \\ \hline
$\mathcal{C}_7$ & $[30,14,3]$  & $\mathcal{C}_{14}$ & $[30,28,2]$ \\ \hline
\end{tabular}
\end{minipage}}
\end{table}

\end{enumerate}

\section{Conclusion}

In this work, we presented a new method for constructing LCD codes over any finite commutative ring. We showed that the LCD codes constructed with our method are ideals in a group ring $RG,$ i.e., our LCD codes are also group codes. We proved that with a certain condition on the group ring element $v,$ one can construct non-trivial group LCD codes with our method. Moreover, we also proved that under some more restrictions on the group ring element $v,$ one can construct group reversible LCD codes with our method. We presented many examples of group LCD codes and group reversible LCD codes with different parameters. A possible direction for future research is to consider our approach but with groups of orders higher than we used in this paper to obtain more group LCD and group reversible LCD codes of lengths greater than we constructed in this paper.

\end{document}